\documentclass[10pt,technote,letterpaper,twoside,twocolumn]{IEEEtran}

\usepackage{cite}
\usepackage{amsmath,amssymb,amsfonts}
\usepackage{graphicx}
\usepackage{multirow}
\usepackage{subcaption}
\usepackage{caption}
\usepackage{float}
\usepackage{textcomp}
\usepackage{xcolor}
\usepackage{tikz}
\usepackage{pgfplots}
\usepackage{stfloats}
\usepackage{mathrsfs}
\usepackage{balance}
\usepackage{amsthm} 
\usepackage[ruled,vlined]{algorithm2e}

\newtheorem{Proposition}{Proposition}
\usepackage{comment}
 \usepackage{booktabs}
\usepackage{algpseudocode}
\usepackage{acronym}
\usepackage[hyperfirst=true]{glossaries}
\usepackage[hidelinks]{hyperref} 
\usepackage{makecell}
\makeglossaries
\newacronym{OFDM}{OFDM}{orthogonal frequency division multiplexing}
\newacronym{OTFS}{OTFS}{orthogonal time frequency space}
\newacronym{AFDM}{AFDM}{affine frequency division multiplexing}
\newacronym{SIMO}{SIMO}{single-input multiple-output}
\newacronym{DL}{DL}{deep learning}
\newacronym{EVM}{EVM}{error vector magnitude}
\newacronym{BER}{BER}{bit error rate}
\newacronym{HSR}{HSR}{high-speed railways}
\newacronym{V2V}{V2V}{vehicle-to-vehicle}
\newacronym{V2I}{V2I}{vehicle-to-infrastructure}
\newacronym{V2X}{V2X}{vehicle-to-everything}
\newacronym{UAV}{UAV}{unmanned aerial vehicle}
\newacronym{ICI}{ICI}{intercarrier interference}
\newacronym{ISI}{ISI}{intersymbol interference}
\newacronym{IPI}{IPI}{interpath interference}
\newacronym{CFR}{CFR}{channel frequency response}
\newacronym{SISO}{SISO}{single-input single-output}
\newacronym{DoA}{DoA}{direction-of-arrival}
\newacronym{TX}{TX}{transmitter}
\newacronym{RX}{RX}{receiver}
\newacronym{CP}{CP}{cyclic prefix}
\newacronym{RMSE}{RMSE}{root mean squared error}
\newacronym{CRLB}{CRLB}{Cramér-Rao Lower Bound}
\newacronym{AWGN}{AWGN}{additive white gaussian noise}
\newacronym{SNR}{SNR}{signal-to-noise ratio}
\newacronym{MSE}{MSE}{mean squared error}
\newacronym{MLP}{MLP}{multilayer perceptron}
\newacronym{FIM}{FIM}{Fisher information matrix}
\newacronym{ULA}{ULA}{uniform linear array}
\newacronym{DFT}{DFT}{discrete Fourier transform}
\newacronym{QAM}{QAM}{quadrature amplitude modulation}
\newacronym{MF}{MF}{matched filter}
\newacronym{MRC}{MRC}{maximum ratio combining}
\newacronym{CFAR}{CFAR}{constant false alarm rate}
\newacronym{LS}{LS}{least squares}
\newacronym{FNN}{FNN}{feedforward neural network}
\newacronym{MP}{MP}{message passing}
\newacronym{RCP}{RCP}{reduced cyclic prefix}
\newacronym{DD}{DD}{delay-Doppler}
\newacronym{IDZT}{IDZT}{inverse discrete Zak transform}
\newacronym{DZT}{DZT}{discrete Zak transform}
\newacronym{PDR}{PDR}{pilot-to-data ratio}
\newacronym{PAPR}{PAPR}{peak-to-average power ratio}
\newacronym{MIMO}{MIMO}{multiple input multiple output}
\usepackage{orcidlink}
\usetikzlibrary{arrows.meta, positioning, fit, backgrounds, calc}

\allowdisplaybreaks

\begin{document}

\title{Superimposed Cross-Pilots: Addressing Fractional Shifts in DoA-Aided OTFS}

\author{
Mauro Marchese$^{\orcidlink{0009-0008-0265-5840}}$, Pietro Savazzi$^{\orcidlink{0000-0003-0692-8566}}$
\thanks{Mauro Marchese is with the Department of Electrical, Computer and Biomedical Engineering, University of Pavia, Pavia, 27100 Italy  (e-mail: mauro.marchese01@universitadipavia.it).}
\thanks{Pietro Savazzi is with the Department of Electrical, Computer and Biomedical Engineering, University of Pavia, Pavia, 27100 Italy (e-mail: pietro.savazzi@unipv.it), and with the Consorzio Nazionale Interuniversitario per le Telecomunicazioni - CNIT.}
}

\maketitle

\begin{abstract}
In this work, a novel superimposed pilot scheme, named superimposed cross-pilots, is proposed for fractional parameter estimation in multi-antenna \gls{OTFS} receivers. Assuming a large \gls{ULA} size at the receiver, the multipath components are separated in the angular domain through a \gls{MF}. It is then shown that the proposed superimposed pilot scheme enables the computation of integrated delay and Doppler profiles by averaging the received delay-Doppler matrix across the Doppler and delay axes, respectively. This procedure helps reduce data-to-pilot interference via data averaging, eliminating the need for iterative cancellation schemes. Based on this, a fractional parameter estimation algorithm, which exploits \glspl{MF}, is derived. Simulation results show that the proposed approach outperforms existing \gls{OTFS} superimposed pilot schemes, achieving a lower \gls{BER} while exhibiting a trade-off between \gls{PAPR} and communication performance.
\end{abstract}

\begin{IEEEkeywords}
Channel estimation, DoA-aided, OTFS, superimosed pilot.
\end{IEEEkeywords}

\glsresetall

\section{Introduction}
\IEEEPARstart{A}{ngle}-domain processing allows for the separation of multipath components when the receiver is equipped with a large antenna array \cite{Ge2019,Guo2017,marchese20266gofdmcommunicationshigh,Cheng2020}, thereby transforming a doubly-dispersive channel into multiple parallel flat-fading channels. In \cite{Ge2019,Guo2017,marchese20266gofdmcommunicationshigh}, multi-antenna \gls{OFDM} receivers are proposed for reliable communications under receiver mobility \cite{Ge2019,Guo2017} and the mobility of both transceivers and scatterers \cite{marchese20266gofdmcommunicationshigh}. Similarly, as the \gls{OTFS} waveform proposed in \cite{Hadani2017} has gained significant attention for the development of next-generation wireless systems \cite{HadaniMonk2018,Mohammed2022,Zhou2024,Hong2026,Nie2026}, a multi-antenna receiver is designed in \cite{Cheng2020} to separate multipaths in the angular domain. This approach increases sparsity in the delay-Doppler domain and reduces the data detection complexity of the \gls{MP} algorithm. However, \cite{Cheng2020} assumes ideal channel estimation.
The channel estimation problem in \gls{OTFS} has been addressed in several works \cite{Raviteja2019,Khan2023,Yogesh2024,Muppaneni2023,marchese2024,marchese2025disjoint,Yuan2021,Mishra2022,Jesbin2023,kanazawa2025}. Embedded-pilot and full-pilot schemes have been investigated in \cite{Raviteja2019,Khan2023,Yogesh2024} and \cite{Muppaneni2023,marchese2024,marchese2025disjoint}, respectively. The algorithm in \cite{Raviteja2019} assumes integer delays and Doppler shifts, thereby exhibiting performance degradation in real-world scenarios with fractional delays and Doppler shifts \cite{Muppaneni2023,marchese2024,Yogesh2024,marchese2025disjoint}. The problem of fractional parameter estimation has been addressed in \cite{Khan2023} and further investigated in \cite{Muppaneni2023,Yogesh2024,marchese2024,marchese2025disjoint}.
Specifically, in \cite{Khan2023,Yogesh2024,marchese2025disjoint}, disjoint delay-Doppler estimation is performed, reducing complexity compared to joint delay-Doppler estimation \cite{Muppaneni2023,marchese2024}. Superimposed pilot schemes have been introduced in the literature \cite{Yuan2021,Mishra2022,Jesbin2023,kanazawa2025} to increase spectral efficiency compared to embedded-pilot and full-pilot approaches. Single-pilot and multiple-pilot schemes have been proposed in \cite{Yuan2021} and \cite{kanazawa2025}, respectively. Moreover, in \cite{kanazawa2025}, it is shown that the multiple superimposed pilot scheme outperforms embedded-pilot and single superimposed pilot schemes by achieving higher throughput. However, in superimposed pilot approaches, it is typically assumed that the channel consists of integer delays and Doppler shifts \cite{Yuan2021,Mishra2022,Jesbin2023,kanazawa2025}.

In light of the above discussion, this work proposes a novel superimposed pilot scheme, termed \textit{superimposed cross-pilots}, for \gls{DoA}-aided \gls{OTFS}-based receivers. It is demonstrated that the proposed scheme enables the computation of integrated delay and Doppler profiles through averaging. This procedure effectively reduces data-to-pilot interference, eliminating the need for iterative schemes required by state-of-the-art superimposed pilot approaches \cite{kanazawa2025}. Finally, based on the proposed scheme, a low-complexity disjoint fractional delay-Doppler estimation algorithm is designed.

\section{System Model}
The scenario includes a single-antenna \gls{TX} and a multi-antenna \gls{RX} equipped with a \gls{ULA} with $N_r$ receiving antennas and working at carrier frequency $f_c$.
The \gls{TX} sends to the \gls{RX} an \gls{OTFS} frame including $M$ delay bins and $N$ Doppler bins. Therefore, the \gls{OTFS} transmit signal is made of $M$ subcarriers with spacing $\Delta f=1/T$, where $T$ is the symbol duration, and $N$ time slots. Thus, the signal bandwidth is $B=M\Delta f$. The \gls{TX} adopts \gls{CP}-\gls{OTFS}, meaning that each block of the transmit \gls{OTFS} frame is preceded by a \gls{CP} with duration $T_{PC}>\sigma_\tau$ to prevent \gls{ISI}, where $\sigma_\tau$ is the channel delay spread. Hence, the overall symbol duration is $T'=T_{CP}+T$ and the frame duration is $T_f=NT'$. The system model is depicted in Figure \ref{fig:system_model}, illustrating the \gls{TX} and \gls{DoA}-aided \gls{RX} processing chains.

\subsection{Single-Antenna OTFS Transmitter}
The \gls{OTFS} modulator arranges $MN$ symbols in the \gls{DD} domain over the two-dimensional grid $I_{DD}=\big\{m\Delta\tau, n\Delta\nu \ | \ 0 \leq m \leq M-1, \ 0 \leq n \leq N-1\big\}$, where $\Delta\tau = 1/B$ and $\Delta\nu = 1/T_f$ represent the delay and Doppler resolutions, respectively. Hence, the \gls{DD} matrix $\mathbf{X} \in \mathbb{C}^{M \times N}$ is obtained. 
The \gls{TX} uses $N_p$ superimposed pilots for channel estimation. Therefore,
\begin{equation} \label{eq:SuperimposedPilotModel}
\mathbf{X}=\sqrt{E_s}\mathbf{X}_d+\sqrt{E_p}\mathbf{X}_p \in \mathbb{C}^{M \times N},
\end{equation}
where $\mathbf{X}_d$ contains \gls{QAM} data with $\mathbb{E}\big[\big|[\mathbf{X}_d]_{m,n}\big|^2\big]=1$ and $\mathbf{X}_p$ contains pilot symbols such that $[\mathbf{X}_p]_{m,n}=1$ if $(m,n)=\big\{(m_p^{(1)},n_p^{(1)}),(m_p^{(2)},n_p^{(2)}),\dots,(m_p^{(N_p)},n_p^{(N_p)})\big\}$, and $[\mathbf{X}_p]_{m,n}=0$ otherwise. Moreover, $E_s$ is the average energy per symbol allocated for data and $E_p$ is the pilot energy. The total energy is split between pilot and data by fixing the \gls{PDR}, which is defined as $\text{PDR}=E_p/E_s$. The \gls{TX} is subject to an average power constraint so that to total energy per frame $E_f=MNE_s+N_pE_p$ is fixed.

The transmit vector is obtained via \gls{IDZT} as
\begin{equation} \label{eq:TransmitSamplesVector}
\mathbf{s}=\big(\mathbf{F}_{N}^{\mathsf{H}}\otimes\mathbf{I}_{M}\big)\mathbf{x}\in \mathbb{C}^{MN},
\end{equation}
where $[\mathbf{F}_N]_{p,q}=\frac{1}{\sqrt{N}}e^{-j2\pi\frac{pq}{N}}$ is the $N$-point unitary \gls{DFT} matrix, $\mathbf{I}_{M}$ is the identity matrix of order $M$ and $\textbf{x}=\text{vec}(\mathbf{X})$.

\usetikzlibrary{shapes.symbols}
\begin{figure*}[!t]
\centering
 
\tikzset{
  txblk/.style={
    draw, fill=blue!12, rounded corners=3pt,
    minimum width=2.5cm, minimum height=1.45cm,
    align=center, font=\small, line width=0.45pt
  },
  mfblk/.style={
    draw, fill=teal!22, rounded corners=3pt,
    minimum width=2cm, minimum height=0.9cm,
    align=center, font=\small, inner sep=3pt, line width=0.45pt
  },
  prblk/.style={
    draw, fill=violet!15, rounded corners=3pt,
    minimum width=2cm,  minimum height=0.9cm,
    align=center, font=\small, inner sep=3pt, line width=0.45pt
  },
  estblk/.style={
    draw, fill=orange!20, rounded corners=3pt,
    minimum width=2.5cm,  minimum height=0.9cm,
    align=center, font=\small, inner sep=3pt, line width=0.45pt
  },
  rarr/.style={
    -{Latex[length=2pt, width=3.5pt]},
    line width=0.55pt
  },
}
 
\begin{tikzpicture}

\node[font=\small\bfseries] (txtitle) at (0, 1.25) {OTFS TX};

\node[draw, fill=gray!18, rounded corners=2pt, line width=0.4pt,
      font=\scriptsize, align=center,
      minimum width=2.0cm, minimum height=0.82cm] (xd) at (-1.1, 0.42)
  {\textbf{Data matrix}\\[2pt]%
   $\mathbf{X}_d \in \mathbb{C}^{M\times N}$};

\node[draw, fill=orange!28, rounded corners=2pt, line width=0.4pt,
      font=\scriptsize, align=center,
      minimum width=2.0cm, minimum height=0.82cm] (xp) at (1.1, 0.42)
  {\textbf{Cross-pilots}\\[2pt]%
   $\mathbf{X}_p$};

\node[draw, fill=blue!7, rounded corners=2pt, line width=0.35pt,
      font=\small, align=center,
      minimum width=4.3cm, minimum height=0.65cm] (xsum) at (0, -0.72)
  {$\mathbf{X}=\sqrt{E_s}\,\mathbf{X}_d+\sqrt{E_p}\,\mathbf{X}_p$};

\node[draw, fill=green!12, rounded corners=2pt, line width=0.4pt,
      font=\small, align=center,
      minimum width=4.3cm, minimum height=0.65cm] (idzt) at (0, -1.78)
  {\textbf{IDZT}\;\;$\mathbf{s}=(\mathbf{F}_N^{\mathsf{H}}\otimes\mathbf{I}_M)\,\mathrm{vec}(\mathbf{X})$};

\draw[rarr] (xp.south) -- (xsum.north);
\draw[rarr] (xd.south) -- (xsum.north);
\draw[rarr] (xsum.south) -- (idzt.north);

\begin{pgfonlayer}{background}
  \node[draw, fill=blue!9, rounded corners=4pt, line width=0.45pt,
        fit=(txtitle)(xd)(xp)(idzt), inner sep=7pt] (tx) {};
\end{pgfonlayer}

\node[minimum width=0.42cm, minimum height=5.0cm, inner sep=0]
  (ula) at ($(tx.east)+(4.3,0)$) {};
 
\draw[draw=teal!65, fill=teal!14, rounded corners=2pt, line width=0.45pt]
  (ula.south west) rectangle (ula.north east);

\node[font=\small\bfseries, rotate=90, text=teal!80!black]
  at (ula.center) {ULA};
  

\node[cloud, cloud puffs=12, cloud puff arc=100,
      draw=black, fill=blue!5, line width=0.5pt,
      font=\scriptsize, align=center,
      inner sep=1pt,
      scale=0.7, transform shape]
  (channel) at ($(tx.east)!0.50!(ula.west)$)
  {\normalsize Channel\\[3pt]\normalsize $\{\theta_p,\,l_p,\,k_p,\,\alpha_p\}_{p=1}^{P}$};
  
\draw[-{Latex[length=4pt, width=5pt]}, line width=0.8pt]
  (tx.east) -- (channel);
\draw[-{Latex[length=4pt, width=5pt]}, line width=0.8pt]
  (channel) -- (ula.west);

\node[mfblk]  (mf1) at ($(ula.east)+(1.55, 1.5)$)
  {\textbf{MF($\theta_1$)\,+\,DZT}};
\node[prblk,  right=0.35cm of mf1] (pr1)
  {\textbf{DD profiles}\\\scriptsize $\mathbf{u}_1,\mathbf{v}_1$\;};
\node[estblk, right=0.35cm of pr1] (es1)
  {\textbf{DD estimation}\\\scriptsize $\hat{l}_1,\,\hat{k}_1,\,\hat{\alpha}_1$};
\draw[rarr] ($(ula.east)+(0, 1.5)$) -- (mf1.west);
\draw[rarr] (mf1.east) -- (pr1.west);
\draw[rarr] (pr1.east) -- (es1.west);
 
\node[mfblk]  (mf2) at ($(ula.east)+(1.55, 0.5)$)
  {\textbf{MF($\theta_2$)\,+\,DZT}};
\node[prblk,  right=0.35cm of mf2] (pr2)
  {\textbf{DD profiles}\\\scriptsize $\mathbf{u}_2,\mathbf{v}_2$\;};
\node[estblk, right=0.35cm of pr2] (es2)
  {\textbf{DD estimation}\\\scriptsize $\hat{l}_2,\,\hat{k}_2,\,\hat{\alpha}_2$};
\draw[rarr] ($(ula.east)+(0, 0.5)$) -- (mf2.west);
\draw[rarr] (mf2.east) -- (pr2.west);
\draw[rarr] (pr2.east) -- (es2.west);
 
\node at (mf2 |- {(0,-0.5)}) {\Large$\vdots$};
\node at (pr2 |- {(0,-0.5)}) {\Large$\vdots$};
\node at (es2 |- {(0,-0.5)}) {\Large$\vdots$};

\node[mfblk]  (mf4) at ($(ula.east)+(1.55,-1.5)$)
  {\textbf{MF($\theta_P$)\,+\,DZT}};
\node[prblk,  right=0.35cm of mf4] (pr4)
  {\textbf{DD profiles}\\\scriptsize $\mathbf{u}_P,\mathbf{v}_P$\;};
\node[estblk, right=0.35cm of pr4] (es4)
  {\textbf{DD estimation}\\\scriptsize $\hat{l}_P,\,\hat{k}_P,\,\hat{\alpha}_P$};
\draw[rarr] ($(ula.east)+(0,-1.5)$) -- (mf4.west);
\draw[rarr] (mf4.east) -- (pr4.west);
\draw[rarr] (pr4.east) -- (es4.west);
 
\begin{pgfonlayer}{background}
  \node[draw, dashed, rounded corners=5pt,
        inner sep=8pt, line width=0.5pt,
        fit=(mf1)(mf4)(es1)(es4),
        label={[font=\small\bfseries]above:%
          DoA-Aided OTFS Receiver}
       ] (bbox) {};
\end{pgfonlayer}
 
\end{tikzpicture}
 
\caption{System model of the proposed \gls{DoA}-aided \gls{OTFS} receiver with superimposed cross-pilots. The single-antenna transmitter sends an \gls{OTFS} frame $\mathbf{X}$ superimposing pilot $\mathbf{X}_p$ and data $\mathbf{X}_d$. After propagation through $P$ multipath components
(each characterised by \gls{DoA} $\theta_p$, normalized delay $l_p$,
normalized Doppler $k_p$, and complex gain $\alpha_p$), the \gls{ULA} separates the paths in the angular domain. Each branch applies a \gls{MF} followed by a \gls{DZT}, computes integrated delay/Doppler profiles $\mathbf{u}_p$, $\mathbf{v}_p$ via averaging, and outputs disjoint fractional estimates $\hat{l}_p$, $\hat{k}_p$, $\hat{\alpha}_p$.}
\label{fig:system_model}
\end{figure*}

\subsection{Observation Model at the Multi-Antenna Receiver}
The transmit signal passes through a doubly-dispersive wireless channel made of $P$ propagation paths with delay $\tau_p$, Doppler shift $\nu_p$, channel gain $\alpha_p$ and \gls{DoA} $\theta_p$. According to \cite{marchese2025disjoint}, the received time-spatial observations $\mathbf{R}\in\mathbb{C}^{MN\times N_r}$ are obtained as
\begin{equation} \label{eq:timespatialobs}
\mathbf{R}=\sum_{p=1}^P\alpha_p\boldsymbol{\Delta}(k_p)\big(\mathbf{I}_N \otimes \mathbf{C}_M(l_p)\big)\mathbf{s}\mathbf{a}_{rx}^\top(\theta_p)+\mathbf{N},
\end{equation}
where $l_p=\tau_p/\Delta\tau$ and $k_p=\nu_p/\Delta\nu$ are the normalized delay and Doppler shifts, respectively. Moreover, $\mathbf{N}\in\mathbb{C}^{MN\times N_r}$ is the \gls{AWGN} matrix and $\text{vec}(\mathbf{N})\sim\mathcal{CN}(0,\sigma^2\mathbf{I}_{MNN_r})$. Noise variance is given as $\sigma^2=N_0$ where $N_0$ is the noise power spectral density. Further, $\mathbf{a}_{rx}(\theta)$ is the steering vector of the \gls{ULA} at the \gls{RX}. It is computed as $\mathbf{a}_{rx}(\theta)=\big[1 \ \ e^{j\frac{2\pi}{\lambda}d\sin(\theta)} \ \ \dots \ \ e^{j\frac{2\pi}{\lambda}d(N_r-1)\sin(\theta)}\big]^\top$ assuming the antenna spacing at the \gls{RX} is $d=\lambda/2$, where $\lambda=c/f_c$ is the wavelength and $c$ denotes the speed of light. Finally, the Doppler matrix $\boldsymbol{\Delta}(k)\in\mathbb{C}^{MN\times MN}$ is given as
\begin{equation}
    \boldsymbol{\Delta}(k_p)=\mathbf{D}_N^{k_p}\otimes\tilde{\mathbf{D}}_M^{k_p},
\end{equation}
where $\mathbf{D}_N\in\mathbb{C}^{N\times N}$ and $\tilde{\mathbf{D}}_M\in\mathbb{C}^{M\times M}$ are diagonal matrices with $[\mathbf{D}_N]_{n,n}=e^{j\frac{2\pi}{N}n}$ and $[\tilde{\mathbf{D}}_M]_{m,m}=e^{j\frac{2\pi}{MN}\frac{mT}{T'}}$, respectively. Moreover, $\mathbf{D}_N$ and $\tilde{\mathbf{D}}_M$ capture intersymbol and intrasymol (i.e., \gls{ICI}) Doppler-induced phase rotations, respectively.
The delay matrix $\mathbf{C}_M(l_p)\in\mathbb{C}^{M\times M}$ is a circulant matrix defined as
\begin{equation}
    \mathbf{C}_M(l_p)=\mathbf{F}_M^{\mathsf{H}} (\mathbf{D}_M^*)^{l_p} \mathbf{F}_M,
\end{equation}
where $\mathbf{D}_M\in\mathbb{C}^{M\times M}$ is a diagonal matrix with $[\mathbf{D}_M]_{m,m}=e^{j\frac{2\pi}{M}m}$.
The \gls{SNR} is obtained as 
\begin{equation}
\text{SNR}=\frac{E_f}{MN\sigma^2}=\frac{MNE_s+N_pE_p}{MN\sigma^2}.
\end{equation}

 \subsection{Angle-Domain Beamforming}
 Hereafter, the following assumptions are made: (i) the \glspl{DoA} are known due to previous estimation\footnote{The estimation of \glspl{DoA} is beyond the scope of this work, which focuses on fractional delay and Doppler estimation for superimposed pilot-based angle-domain \gls{OTFS} receivers. Since the proposed approach and the baseline method are evaluated under identical operating conditions, any impairment from imperfect \gls{DoA} knowledge affects both equally, ensuring a fair comparison.}, (ii) the receiver is equipped with a large \gls{ULA} (the number of antennas is sufficienlty high), so that the \gls{IPI} between multipaths with different angles is negligible \cite{Ge2019,marchese20266gofdmcommunicationshigh}.
The first operation made by angle-domain based receivers is multipath separaion by means of a \gls{MF} \cite{Ge2019,marchese20266gofdmcommunicationshigh}. Using the fact that $\mathbf{a}_{\text{rx}}^\top(\theta_1)\mathbf{a}_{\text{rx}}^*(\theta_2)\approx 0$ if $\theta_1\neq\theta_2$ and $\mathbf{a}_{\text{rx}}^\top(\theta_1)\mathbf{a}_{\text{rx}}^*(\theta_2)=N_r$ if $\theta_1=\theta_2$ for sufficiently high $N_r$ \cite{Ge2019,marchese20266gofdmcommunicationshigh}, multipaths are separated in the angular domain performing
\begin{equation}\label{eq:rp_beamforming}
\mathbf{r}_p=\frac{\mathbf{R}\mathbf{a}^*_{rx}(\theta_p)}{N_r}\approx\alpha_p\boldsymbol{\Delta}(k_p)\big(\mathbf{I}_N \otimes \mathbf{C}_M(l_p)\big)\mathbf{s}+\mathbf{n_p},
\end{equation}
where $\mathbf{n}_p=\mathbf{N}\mathbf{a}^*_{rx}(\theta_p)/N_r$. Hence, $\mathbf{n}_p\sim\mathcal{CN}(\mathbf{0}_{MN},\sigma^2\mathbf{I}_{MN}/N_r)$.
After that, the signal in \eqref{eq:rp_beamforming} is converted to delay-Doppler domain via \gls{DZT} as
\begin{equation}\begin{split}\label{eq:delayDopplerObserv}
\mathbf{y}_p=&\big(\mathbf{F}_{N}\otimes\mathbf{I}_{M}\big)\mathbf{r}_p
\\
\approx&\alpha_p\big(\mathbf{F}_{N}\otimes\mathbf{I}_{M}\big)\boldsymbol{\Delta}(k_p)\big(\mathbf{F}_{N}^{\mathsf{H}}\otimes\mathbf{C}_M(l_p)\big)\mathbf{x}+\mathbf{z}_p
\\
=&\alpha_p\big(\mathbf{F}_N\mathbf{D}_N^{k_p}\mathbf{F}_{N}^{\mathsf{H}}\otimes\tilde{\mathbf{D}}_M^{k_p}\mathbf{C}_M(l_p)\big)\mathbf{x}+\mathbf{z}_p,
\end{split}\end{equation}
where $\mathbf{z}_p=\big(\mathbf{F}_{N}\otimes\mathbf{I}_{M}\big)\mathbf{n}_p$ is the delay-Doppler domain \gls{AWGN} vector.

\begin{figure}[t]
 \centering
 \begin{subfigure}[t]{0.24\textwidth} 
 \centering
 \resizebox{0.99\columnwidth}{!}{
 \input{Results/multiplepilotsscheme.tikz}}
 \caption{}\label{fig:multipilotscheme}
 \end{subfigure}%
 \hfill 
 \begin{subfigure}[t]{0.24\textwidth} 
 \centering
 \resizebox{0.99\columnwidth}{!}{
 \input{Results/crosspilotscheme.tikz}}
 \caption{}\label{fig:crosspilotscheme}
 \end{subfigure}%
 \caption{Pilot matrix $\mathbf{X}_p \in \mathbb{C}^{M \times N}$ in the delay-Doppler (DD)
domain ($M=64$ delay bins, $N=16$ Doppler bins) for the superimposed pilot configurations compared in this work:
(a) multiple superimposed pilot scheme of \cite{kanazawa2025}, where
pilots are placed at $N_p$ isolated DD grid locations
$\{(m_p^{(i)},n_p^{(i)})\}_{i=1}^{N_p}$;
(b) proposed superimposed cross-pilot scheme, where pilots are allocated on a full delay row (index $m_p$, across all $N$ Doppler bins) and a full Doppler column (index $n_p$, across all $M$ delay bins),
yielding $N_p = M+N-1$ pilots in a cross-shaped pattern.} 
 \label{fig:PilotMatrices}
\end{figure}

\section{Proposed Superimposed Pilot Scheme for DoA-Aided Receivers}
This section presents the proposed superimposed pilot scheme and the proposed fractional delay-Doppler estimation algorithm.

\subsection{Proposed Superimposed Cross-Pilots}
Figure \ref{fig:multipilotscheme} shows the multiple superimposed pilot scheme adopted in \cite{kanazawa2025}. A limitation of the multiple superimposed pilot scheme is that, in the presence of fractional channel parameters, the pilots interfere with each other due to the spreading effect \cite{marchese2025robust6gofdmhighmobility,marchese2025disjoint,Muppaneni2023}. In \gls{DoA}-aided receivers, the following superimposed pilot scheme can be adopted, where the delay-Doppler pilots are allocated as
\begin{equation}\label{eq:crosspilots}
\mathbf{x}_p=(\mathbf{1}_N\otimes\mathbf{e}_{m_p})+\mathbf{e}_{n_p}\otimes(\mathbf{1}_M-\mathbf{e}_{m_p}),
\end{equation}
where $m_p$ and $n_p$ represent the delay and Doppler indices over which the pilots are superimposed, respectively. Consequently, the number of superimposed pilots is $N_p=M+N-1$. Figure \ref{fig:crosspilotscheme} illustrates an example of the proposed pilot scheme compared to the multiple superimposed pilot scheme in \cite{kanazawa2025}.

\subsection{Disjoint Delay-Doppler Estimation Via Averaging}
Given the proposed superimposed cross-pilot scheme, it is possible to estimate delay and Doppler shifts separately as follows. The delay profile is obtained by averaging the columns of the delay-Doppler matrix $\mathbf{Y}_p=\text{vec}^{-1}(\mathbf{y}_p)$ as
\begin{equation}\label{eq:delayprofile}
\mathbf{u}_p=\frac{\mathbf{Y}_p\mathbf{1}_N}{N}=\frac{(\mathbf{1}_N^\top\otimes\mathbf{I}_M)\mathbf{y}_p}{N}\in\mathbb{C}^M.
\end{equation}
Similarly, the Doppler profile is obtained by summing up the rows of $\mathbf{Y}_p$ as
\begin{equation}\label{eq:Dopplerprofile}
\mathbf{v}_p=\frac{\mathbf{Y}_p^\top\mathbf{1}_M}{M}=\frac{(\mathbf{I}_N\otimes\mathbf{1}^\top_M)\mathbf{y}_p}{M}\in\mathbb{C}^N.
\end{equation}
\begin{Proposition}\label{Proposition:delayprofile}
The integrated delay profile in \eqref{eq:delayprofile} is given by\footnote{The subscript $p$, indicating the path brench, is omitted to lighten the notation in both the statement of the Proposition and the proof.}
\begin{equation}\label{eq:delayProfileProposition}
\mathbf{u}=\alpha\Big({\frac{\sqrt{E_p}}{N}}\mathbf{g}_u(l,k)+{\frac{\sqrt{E_s}}{N}}\tilde{\mathbf{D}}_M^{k}\mathbf{C}_M(l)\sum_{n=0}^{N-1}[\mathbf{X}_d]_{:,n}\Big)+\tilde{\mathbf{z}}_u,
\end{equation}
where $\tilde{\mathbf{z}}_u\sim\mathcal{CN}\big(\mathbf{0}_M,\frac{\sigma^2}{NN_r}\mathbf{I}_M\big)$ and
\begin{equation}
\mathbf{g}_u(l,k)=(N-1)\tilde{\mathbf{D}}_M^k\mathbf{F}_M^{\mathsf{H}} \Big( [\mathbf{F}_M]_{:,m_p} \odot \mathbf{d}_M^{-l} \Big)+\tilde{\mathbf{d}}_M^k
\end{equation}
with $\mathbf{d}_M=\textnormal{diag}\big(\mathbf{D}_M\big)$ and $\tilde{\mathbf{d}}_M=\textnormal{diag}\big(\tilde{\mathbf{D}}_M\big)$.
\end{Proposition}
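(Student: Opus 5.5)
The plan is to compute $\mathbf{u}$ directly from \eqref{eq:delayprofile} and \eqref{eq:delayDopplerObserv}, and to handle the signal part with the mixed-product property of the Kronecker product. Write $\mathbf{A}=\mathbf{F}_N\mathbf{D}_N^{k}\mathbf{F}_N^{\mathsf{H}}$ and $\mathbf{B}=\tilde{\mathbf{D}}_M^{k}\mathbf{C}_M(l)$, so that $\mathbf{y}=\alpha(\mathbf{A}\otimes\mathbf{B})\mathbf{x}+\mathbf{z}$. Then
\begin{equation*}
(\mathbf{1}_N^\top\otimes\mathbf{I}_M)(\mathbf{A}\otimes\mathbf{B})=(\mathbf{1}_N^\top\mathbf{A})\otimes\mathbf{B}.
\end{equation*}

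The key step, which I expect to be the only non-obvious one, is to show that $\mathbf{1}_N^\top\mathbf{A}=\mathbf{1}_N^\top$ even for fractional $k$.
\begin{itemize}
\item The column sums of the unitary DFT matrix give $\mathbf{1}_N^\top\mathbf{F}_N=\sqrt{N}\,\mathbf{e}_0^\top$.
\item Since $[\mathbf{D}_N^{k}]_{0,0}=e^{j2\pi\cdot 0\cdot k/N}=1$ for any real $k$, we get $\mathbf{e}_0^\top\mathbf{D}_N^{k}=\mathbf{e}_0^\top$.
\item Finally, $\sqrt{N}\,\mathbf{e}_0^\top\mathbf{F}_N^{\mathsf{H}}=\mathbf{1}_N^\top$, because the zeroth row of $\mathbf{F}_N^{\mathsf{H}}$ has all entries equal to $1/\sqrt{N}$.
\end{itemize}
Hence averaging across the Doppler axis removes the Doppler spreading in that dimension entirely. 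The signal part therefore reduces to
\begin{equation*}
\frac{\alpha}{N}(\mathbf{1}_N^\top\otimes\mathbf{B})\mathbf{x}=\frac{\alpha}{N}\,\mathbf{B}\sum_{n=0}^{N-1}[\mathbf{X}]_{:,n}.
\end{equation*}
Substituting \eqref{eq:SuperimposedPilotModel} directly yields the data term $\frac{\alpha\sqrt{E_s}}{N}\tilde{\mathbf{D}}_M^{k}\mathbf{C}_M(l)\sum_n[\mathbf{X}_d]_{:,n}$.

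For the pilot term I will use \eqref{eq:crosspilots} to compute the column sum of $\mathbf{X}_p$. The row $m_p$ contributes $N\mathbf{e}_{m_p}$, and column $n_p$ contributes $\mathbf{1}_M-\mathbf{e}_{m_p}$. So $\sum_n[\mathbf{X}_p]_{:,n}=(N-1)\mathbf{e}_{m_p}+\mathbf{1}_M$, and I evaluate $\mathbf{B}$ on the two pieces separately.
\begin{itemize}
\item For the first piece, $\mathbf{C}_M(l)\mathbf{e}_{m_p}=\mathbf{F}_M^{\mathsf{H}}(\mathbf{D}_M^*)^{l}[\mathbf{F}_M]_{:,m_p}=\mathbf{F}_M^{\mathsf{H}}([\mathbf{F}_M]_{:,m_p}\odot\mathbf{d}_M^{-l})$.
\item For the second piece, $\mathbf{F}_M\mathbf{1}_M=\sqrt{M}\mathbf{e}_0$ and $[\mathbf{D}_M^*]^{l}_{0,0}=1$. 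This gives $\mathbf{C}_M(l)\mathbf{1}_M=\mathbf{1}_M$, and hence $\tilde{\mathbf{D}}_M^{k}\mathbf{1}_M=\tilde{\mathbf{d}}_M^{k}$.
\end{itemize}
Collecting the two pieces gives exactly $\frac{\alpha\sqrt{E_p}}{N}\mathbf{g}_u(l,k)$.

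For the noise, $\mathbf{z}=(\mathbf{F}_N\otimes\mathbf{I}_M)\mathbf{n}$ is obtained from $\mathbf{n}$ by a unitary transform. It is therefore still $\mathcal{CN}(\mathbf{0},\frac{\sigma^2}{N_r}\mathbf{I}_{MN})$. Then $\tilde{\mathbf{z}}_u=(\mathbf{1}_N^\top\otimes\mathbf{I}_M)\mathbf{z}/N$ is zero-mean Gaussian with covariance
\begin{equation*}
\frac{\sigma^2}{N_rN^2}(\mathbf{1}_N^\top\mathbf{1}_N)\otimes\mathbf{I}_M=\frac{\sigma^2}{NN_r}\mathbf{I}_M,
\end{equation*}
which completes the statement. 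All steps use the approximation in \eqref{eq:rp_beamforming}, i.e., negligible IPI.
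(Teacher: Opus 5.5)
Your proof is correct and follows essentially the same route as the paper. Both arguments rest on the Kronecker mixed-product property, the DFT identity $\mathbf{1}_N^\top\mathbf{F}_N=\sqrt{N}\mathbf{e}_0^\top$, the unit $(0,0)$ entry of the fractional-power phase matrices, and $\mathbf{C}_M(l)\mathbf{1}_M=\mathbf{1}_M$. The differences are organizational: you establish $\mathbf{1}_N^\top\mathbf{F}_N\mathbf{D}_N^{k}\mathbf{F}_N^{\mathsf{H}}=\mathbf{1}_N^\top$ once and reduce the pilot part to the column sum $(N-1)\mathbf{e}_{m_p}+\mathbf{1}_M$, whereas the paper evaluates the three pieces of $\mathbf{x}_p$ separately and uses that identity only for the data term; you also compute the noise covariance $\frac{\sigma^2}{NN_r}\mathbf{I}_M$ explicitly, which the paper only asserts.
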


\begin{proof}
The proof relies on the Kronecker product properties and on the \gls{DFT} property $\mathbf{F}_N\mathbf{e}_0=\mathbf{1}_N/\sqrt{N}$.
Combining \eqref{eq:SuperimposedPilotModel}, \eqref{eq:delayDopplerObserv}, \eqref{eq:crosspilots} and plugging \eqref{eq:delayDopplerObserv} into \eqref{eq:delayprofile}, the term related to the pilot in the delay profile becomes proportional to the following terms:
\begin{equation}\begin{split}
&(\mathbf{1}_N^\top\otimes\mathbf{I}_M)\big(\mathbf{F}_N\mathbf{D}_N^{k}\mathbf{F}_{N}^{\mathsf{H}}\otimes\tilde{\mathbf{D}}_M^{k}\mathbf{C}_M(l)\big)(\mathbf{1}_N\otimes\mathbf{e}_{m_p})=
\\
&\mathbf{1}_N^\top\mathbf{F}_N\mathbf{D}_N^{k}\mathbf{F}_{N}^{\mathsf{H}}\mathbf{1}_N\otimes\tilde{\mathbf{D}}_M^{k}\mathbf{C}_M(l)\mathbf{e}_{m_p}=
\\
&N\mathbf{e}_0^\top\mathbf{D}_N^{k}\mathbf{e}_0\tilde{\mathbf{D}}_M^{k}\mathbf{C}_M(l)\mathbf{e}_{m_p}=N\tilde{\mathbf{D}}_M^{k}\mathbf{C}_M(l)\mathbf{e}_{m_p}=
\\
&N\tilde{\mathbf{D}}_M^k\mathbf{F}_M^{\mathsf{H}} \big( [\mathbf{F}_M]_{:,m_p} \odot \mathbf{d}_M^{-l} \big),
\end{split}
\end{equation}

\begin{equation}\begin{split}
&(\mathbf{1}_N^\top\otimes\mathbf{I}_M)\big(\mathbf{F}_N\mathbf{D}_N^{k}\mathbf{F}_{N}^{\mathsf{H}}\otimes\tilde{\mathbf{D}}_M^{k}\mathbf{C}_M(l)\big)(\mathbf{e}_{n_p}\otimes\mathbf{1}_M)=
\\
&\mathbf{1}_N^\top\mathbf{F}_N\mathbf{D}_N^{k}[\mathbf{F}_N^\mathsf{H}]_{:,n_p}\otimes\tilde{\mathbf{D}}_M^{k}\mathbf{F}_M^{\mathsf{H}} (\mathbf{D}_M^*)^{l}\mathbf{F}_M\mathbf{1}_M=
\\
&\sqrt{MN}\mathbf{e}_0^\top[\mathbf{F}_N^\mathsf{H}]_{:,n_p}\tilde{\mathbf{D}}_M^{k}\mathbf{F}_M^{\mathsf{H}} (\mathbf{D}_M^*)^{l}\mathbf{e}_0=\tilde{\mathbf{D}}_M^{k}\mathbf{1}_M,
\end{split}
\end{equation}

\begin{equation}\begin{split}
&(\mathbf{1}_N^\top\otimes\mathbf{I}_M)\big(\mathbf{F}_N\mathbf{D}_N^{k}\mathbf{F}_{N}^{\mathsf{H}}\otimes\tilde{\mathbf{D}}_M^{k}\mathbf{C}_M(l)\big)(\mathbf{e}_{n_p}\otimes\mathbf{e}_{m_p})=
\\
&\tilde{\mathbf{D}}_M^{k}\mathbf{C}_M(l)\mathbf{e}_{m_p}=\tilde{\mathbf{D}}_M^k\mathbf{F}_M^{\mathsf{H}} \big( [\mathbf{F}_M]_{:,m_p} \odot \mathbf{d}_M^{-l} \big).
\end{split}
\end{equation}
Moreover, the term related to data becomes proportional to
\begin{equation}\begin{split}
&(\mathbf{1}_N^\top\otimes\mathbf{I}_M)\big(\mathbf{F}_N\mathbf{D}_N^{k}\mathbf{F}_{N}^{\mathsf{H}}\otimes\tilde{\mathbf{D}}_M^{k}\mathbf{C}_M(l)\big)\mathbf{x}_d=
\\
&\big(\sqrt{N}\mathbf{e}_0^\top\mathbf{D}_N^{k}\mathbf{F}_N^\mathsf{H}\otimes\tilde{\mathbf{D}}_M^{k}\mathbf{C}_M(l)\big)\mathbf{x}_d=\big(\mathbf{1}_N^\top\otimes\tilde{\mathbf{D}}_M^{k}\mathbf{C}_M(l)\big)\mathbf{x}_d,
\end{split}
\end{equation}
where the last equality implies that the data term is proportional to the average of the columns of the data symbol matrix $\mathbf{X}_d$. This averaging reduces the power of the interfering data by a factor of $N$. In fact, as the frame size increases, the data-to-pilot interference tends to zero
\begin{equation}
\lim_{N\rightarrow\infty}\frac{1}{N}\sum_{n=0}^{N-1}[\mathbf{X}_d]_{:,n}=\mathbb{E}\Big[[\mathbf{X}_d]_{:,n}\Big]=\mathbf{0}_M.
\end{equation}
The same averaging procedure applies to the noise, thereby reducing the noise power by a factor of $N$.
\end{proof}

\begin{Proposition}\label{Proposition:dopplerprofile}
Assuming $k\ll N$ (negligible \gls{ICI}), the integrated Doppler profile in \eqref{eq:Dopplerprofile} can be approximated as 
\begin{equation}\label{eq:DopplerProfileProposition}
\mathbf{v}\approx\alpha\Big({\frac{\sqrt{E_p}}{M}}\mathbf{g}_v(k)+{\frac{\sqrt{E_s}}{M}}\mathbf{F}_N\mathbf{D}_N^k\mathbf{F}_N^\mathsf{H}\sum_{m=0}^{M-1}\big[\mathbf{X}_d^\top\big]_{:,m}\Big)+\tilde{\mathbf{z}}_v,
\end{equation}
where $\tilde{\mathbf{z}}_v\sim\mathcal{CN}(\mathbf{0}_N,\frac{\sigma^2}{MN_r}\mathbf{I}_N)$ and
\begin{equation}
\mathbf{g}_v(k)=(M-1)\mathbf{F}_N \Big( \big[\mathbf{F}_N^\mathsf{H}\big]_{:,n_p} \odot \mathbf{d}_N^{k} \Big)+\mathbf{1}_N
\end{equation}
with $\mathbf{d}_N=\textnormal{diag}\big(\mathbf{D}_N\big)$.
\end{Proposition}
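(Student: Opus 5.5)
The plan mirrors the proof of Proposition \ref{Proposition:delayprofile}, with the roles of the two Kronecker factors exchanged. I will substitute \eqref{eq:SuperimposedPilotModel}, \eqref{eq:crosspilots} and \eqref{eq:delayDopplerObserv} into \eqref{eq:Dopplerprofile} and apply the mixed-product rule $(\mathbf{I}_N\otimes\mathbf{1}_M^\top)(\mathbf{A}\otimes\mathbf{B})=\mathbf{A}\otimes\mathbf{1}_M^\top\mathbf{B}$. Here $\mathbf{A}=\mathbf{F}_N\mathbf{D}_N^k\mathbf{F}_N^\mathsf{H}$ and $\mathbf{B}=\tilde{\mathbf{D}}_M^k\mathbf{C}_M(l)$. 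Every term then contains the row vector $\mathbf{1}_M^\top\tilde{\mathbf{D}}_M^k\mathbf{C}_M(l)$. The approximation enters only at this point. Under $k\ll N$ the intra-symbol phases $e^{j2\pi kmT/(MNT')}$ are all close to $1$, so I set $\tilde{\mathbf{D}}_M^k\approx\mathbf{I}_M$. Then $\mathbf{1}_M^\top\mathbf{C}_M(l)=\sqrt{M}\,\mathbf{e}_0^\top(\mathbf{D}_M^*)^l\mathbf{F}_M=\sqrt{M}\,\mathbf{e}_0^\top\mathbf{F}_M=\mathbf{1}_M^\top$, using $\mathbf{1}_M^\top\mathbf{F}_M^\mathsf{H}=\sqrt{M}\mathbf{e}_0^\top$ and $[\mathbf{D}_M]_{0,0}=1$. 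This says that a circulant matrix preserves column sums, and it removes every dependence on $l$, which is consistent with $\mathbf{g}_v$ depending on $k$ only.

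The pilot contributions then reduce to three pieces, as before.
\begin{itemize}
\item The row $\mathbf{1}_N\otimes\mathbf{e}_{m_p}$ gives $\mathbf{A}\mathbf{1}_N\cdot\mathbf{1}_M^\top\mathbf{e}_{m_p}=\sqrt{N}\mathbf{F}_N\mathbf{D}_N^k\mathbf{e}_0=\mathbf{1}_N$, since $[\mathbf{D}_N]_{0,0}=1$.
\item The column piece $\mathbf{e}_{n_p}\otimes\mathbf{1}_M$ gives $M\,\mathbf{F}_N\mathbf{D}_N^k[\mathbf{F}_N^\mathsf{H}]_{:,n_p}=M\,\mathbf{F}_N([\mathbf{F}_N^\mathsf{H}]_{:,n_p}\odot\mathbf{d}_N^k)$.
\item Subtracting $\mathbf{e}_{n_p}\otimes\mathbf{e}_{m_p}$ gives $\mathbf{F}_N([\mathbf{F}_N^\mathsf{H}]_{:,n_p}\odot\mathbf{d}_N^k)$.
\end{itemize}
Collecting these yields $(M-1)\mathbf{F}_N([\mathbf{F}_N^\mathsf{H}]_{:,n_p}\odot\mathbf{d}_N^k)+\mathbf{1}_N=\mathbf{g}_v(k)$. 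With the prefactor $\alpha\sqrt{E_p}/M$ this is exactly the first term of \eqref{eq:DopplerProfileProposition}.

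For the data term I will write $(\mathbf{A}\otimes\mathbf{1}_M^\top)\mathbf{x}_d=\mathbf{A}\,\text{vec}(\mathbf{1}_M^\top\mathbf{X}_d)=\mathbf{A}\mathbf{X}_d^\top\mathbf{1}_M=\mathbf{A}\sum_m[\mathbf{X}_d^\top]_{:,m}$. This gives the second term with factor $\alpha\sqrt{E_s}/M$, together with the same law-of-large-numbers remark as in Proposition \ref{Proposition:delayprofile}.

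For the noise, $\mathbf{z}_p\sim\mathcal{CN}(\mathbf{0},\frac{\sigma^2}{N_r}\mathbf{I})$ because $\mathbf{F}_N\otimes\mathbf{I}_M$ is unitary. Hence $\tilde{\mathbf{z}}_v=(\mathbf{I}_N\otimes\mathbf{1}_M^\top)\mathbf{z}_p/M$ has covariance $\frac{\sigma^2}{N_rM^2}(\mathbf{I}_N\otimes\mathbf{1}_M^\top\mathbf{1}_M)=\frac{\sigma^2}{MN_r}\mathbf{I}_N$.

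The main obstacle is justifying the ICI approximation, which is the only place the argument departs from an exact identity. In the delay case the Doppler factor collapsed exactly because of the sum over $n$, but here the factor $\tilde{\mathbf{D}}_M^k$ sits inside the sum over $m$ and does not cancel. I would argue the approximation elementwise. Because $T/T'<1$, the largest phase is $2\pi k/N$, so $\|\tilde{\mathbf{D}}_M^k-\mathbf{I}_M\|\le 2\pi|k|/N\ll1$. Treating $\tilde{\mathbf{D}}_M^k$ as the identity therefore perturbs $\mathbf{1}_M^\top\tilde{\mathbf{D}}_M^k\mathbf{C}_M(l)$ only slightly.

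Strictly, the data term should then read $\mathbf{A}\mathbf{X}_d^\top\mathbf{C}_M^\top(l)\tilde{\mathbf{D}}_M^k\mathbf{1}_M$. Under the same approximation this reduces to the stated sum, since $\mathbf{C}_M^\top(l)\mathbf{1}_M=\mathbf{1}_M$ by the same column-sum argument.
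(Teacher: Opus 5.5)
Your proposal is correct and follows the paper's proof essentially step for step. It uses the same approximation $\tilde{\mathbf{D}}_M^k\approx\mathbf{I}_M$, the identity $\mathbf{1}_M^\top\mathbf{C}_M(l)=\sqrt{M}\mathbf{e}_0^\top(\mathbf{D}_M^*)^l\mathbf{F}_M=\mathbf{1}_M^\top$, the same three pilot contributions and the same row-averaged data term, while your explicit noise-covariance computation and the bound $\|\tilde{\mathbf{D}}_M^k-\mathbf{I}_M\|\le 2\pi|k|/N$ make precise two points the paper only asserts.
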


\begin{proof}
The proof relies on the approximation $\tilde{\mathbf{D}}_M^{k}\approx\mathbf{I}_M$, which holds if $k\ll N$.
Combining \eqref{eq:SuperimposedPilotModel}, \eqref{eq:delayDopplerObserv} and \eqref{eq:crosspilots}, and plugging \eqref{eq:delayDopplerObserv} into \eqref{eq:Dopplerprofile}, the term related to the pilot in the Doppler profile becomes proportional to the following terms:
\begin{equation}\begin{split}
&(\mathbf{I}_N\otimes\mathbf{1}^\top_M)\big(\mathbf{F}_N\mathbf{D}_N^{k}\mathbf{F}_{N}^{\mathsf{H}}\otimes\tilde{\mathbf{D}}_M^{k}\mathbf{C}_M(l)\big)(\mathbf{1}_N\otimes\mathbf{e}_{m_p})\approx
\\
&\mathbf{F}_N\mathbf{D}_N^{k}\mathbf{F}_{N}^{\mathsf{H}}\mathbf{1}_N\otimes\sqrt{M}\mathbf{e}_0^\top(\mathbf{D}_M^*)^{l} \mathbf{F}_M\mathbf{e}_{m_p}=
\\
&\sqrt{N}\mathbf{F}_N\mathbf{D}_N^{k}\mathbf{e}_0=\mathbf{1}_N,
\end{split}
\end{equation}

\begin{equation}\begin{split}
&(\mathbf{I}_N\otimes\mathbf{1}^\top_M)\big(\mathbf{F}_N\mathbf{D}_N^{k}\mathbf{F}_{N}^{\mathsf{H}}\otimes\tilde{\mathbf{D}}_M^{k}\mathbf{C}_M(l)\big)(\mathbf{e}_{n_p}\otimes\mathbf{1}_M)\approx
\\
&\mathbf{F}_N\mathbf{D}_N^{k}\mathbf{F}_N^\mathsf{H}\mathbf{e}_{n_p}\otimes\mathbf{1}^\top_M\mathbf{F}_M^{\mathsf{H}} (\mathbf{D}_M^*)^{l}\mathbf{F}_M\mathbf{1}_M=
\\
&M\mathbf{e}_0^\top(\mathbf{D}_M^*)^{l}\mathbf{e}_0\mathbf{F}_N\mathbf{D}_N^{k}[\mathbf{F}_N^\mathsf{H}]_{:,n_p}=M\mathbf{F}_N \big( \big[\mathbf{F}_N^\mathsf{H}\big]_{:,n_p} \odot \mathbf{d}_N^{k} \big),
\end{split}
\end{equation}

\begin{equation}\begin{split}
&(\mathbf{I}_N\otimes\mathbf{1}^\top_M)\big(\mathbf{F}_N\mathbf{D}_N^{k}\mathbf{F}_{N}^{\mathsf{H}}\otimes\tilde{\mathbf{D}}_M^{k}\mathbf{C}_M(l)\big)(\mathbf{e}_{n_p}\otimes\mathbf{e}_{m_p})\approx
\\
&\mathbf{F}_N \big( \big[\mathbf{F}_N^\mathsf{H}\big]_{:,n_p} \odot \mathbf{d}_N^{k} \big).
\end{split}
\end{equation}
Moreover, the term related to data becomes proportional to
\begin{equation}\begin{split}
&(\mathbf{I}_N\otimes\mathbf{1}^\top_M)\big(\mathbf{F}_N\mathbf{D}_N^{k}\mathbf{F}_{N}^{\mathsf{H}}\otimes\tilde{\mathbf{D}}_M^{k}\mathbf{C}_M(l)\big)\mathbf{x}_d\approx
\\
&\sqrt{M}\big(\mathbf{F}_N\mathbf{D}_N^{k}\mathbf{F}_{N}^{\mathsf{H}}\otimes\mathbf{e}_0^\top(\mathbf{D}_M^*)^{l}\mathbf{F}_M\big)\mathbf{x}_d=
\\
&\big(\mathbf{F}_N\mathbf{D}_N^{k}\mathbf{F}_{N}^{\mathsf{H}}\otimes\mathbf{1}_M^\top\big)\mathbf{x}_d,
\end{split}
\end{equation}
where the last equality implies that the data term is proportional to the average of the rows of the data symbol matrix $\mathbf{X}_d$. This reduces the power of the interfering data by a factor of $M$. In fact, as the frame size increases, the data-to-pilot interference tends to zero
\begin{equation}
\lim_{M\rightarrow\infty}\frac{1}{M}\sum_{m=0}^{M-1}[\mathbf{X}_d^\top]_{:,m}=\mathbb{E}\Big[[\mathbf{X}_d^\top]_{:,m}\Big]=\mathbf{0}_N.
\end{equation}
As with the delay profile, the same averaging procedure applies to the noise, thereby reducing its impact.
\end{proof}

\begin{algorithm}[t]
\caption{Proposed fractional delay-Doppler estimation using superimposed cross-pilots}\label{alg:propCEalg}
\KwIn{$\mathbf{R},\ \mathbf{x}_p,\ \{{\theta_p}\}_{p=1}^P$}
\For{$p = 1$ \KwTo $P$}{
$\mathbf{r}_p=\frac{\mathbf{R}\mathbf{a}^*_{rx}(\theta_p)}{N_r}$\;

$\mathbf{y}_p=\big(\mathbf{F}_{N}\otimes\mathbf{I}_{M}\big)\mathbf{r}_p$\;

$\mathbf{v}_p=(\mathbf{I}_N\otimes\mathbf{1}^\top_M)\mathbf{y}_p$\;

$\hat{k}_p = \arg\max_k \big| \mathbf{g}_v^\mathsf{H}(k)\mathbf{v}_p \big|$\;

$\mathbf{u}_p=(\mathbf{1}_N^\top\otimes\mathbf{I}_M)\mathbf{y}_p$\;

$\hat{l}_p = \arg\max_l \big| \mathbf{g}_u^\mathsf{H}(l,\hat{k}_p)\mathbf{u}_p \big|$\;

$\hat{\alpha}_p=\frac{\big(\big(\mathbf{D}_N^{\hat{k}_p}\mathbf{F}_{N}^{\mathsf{H}}\otimes\tilde{\mathbf{D}}_M^{\hat{k}_p}\mathbf{C}_M(\hat{l}_p)\big)\mathbf{x}_p\big)^\mathsf{H}\mathbf{r}_p}{N_pE_p}$\;
}
\textbf{Output:}~$\{\hat{\alpha}_{p},\hat{l}_p,\hat{k}_p\}_{p=1}^{P}$\;
\end{algorithm}

Based on the results presented in Proposition \ref{Proposition:delayprofile} and Proposition \ref{Proposition:dopplerprofile}, disjoint delay-Doppler estimation can be performed as follows.
Since the \gls{CP} prevents \gls{ISI}, as shown in \eqref{eq:DopplerProfileProposition}, the Doppler profile is independent of the delay. Therefore, the Doppler shift can be estimated through a \gls{MF} as
\begin{equation}\label{eq:dopplerxcorr}
\hat{k}_p = \arg\max_k \big| \mathbf{g}_v^\mathsf{H}(k)\mathbf{v}_p \big|.
\end{equation}    
In contrast, due to the presence of \gls{ICI}, the delay profile in \eqref{eq:delayProfileProposition} depends on both Doppler and delay shifts. Thus, the delay can be estimated after obtaining the Doppler estimate as
\begin{equation}\label{eq:delayxcorr}
\hat{l}_p = \arg\max_l \big| \mathbf{g}_u^\mathsf{H}(l,\hat{k}_p)\mathbf{u}_p \big|.
\end{equation}
The maximization in \eqref{eq:dopplerxcorr} and \eqref{eq:delayxcorr} is carried out in two steps: (i) a coarse estimate of the integer part is obtained by identifying the absolute peak of the integrated delay and Doppler profiles; (ii) the search space is narrowed to include fractional delays and Doppler shifts around the initial estimates (specifically $l \in [\hat{l}-0.5, \hat{l}+0.5]$ and $k \in [\hat{k}-0.5, \hat{k}+0.5]$) to maximize the correlation via the \glspl{MF} in \eqref{eq:dopplerxcorr} and \eqref{eq:delayxcorr}.
Once the delay-Doppler pairs are determined, the channel gain of the $p$-th path is obtained via \gls{LS} estimation as
\begin{equation}
\hat{\alpha}_p=\frac{\Big(\big(\mathbf{D}_N^{\hat{k}_p}\mathbf{F}_{N}^{\mathsf{H}}\otimes\tilde{\mathbf{D}}_M^{\hat{k}_p}\mathbf{C}_M(\hat{l}_p)\big)\mathbf{x}_p\Big)^\mathsf{H}\mathbf{r}_p}{N_pE_p}.
\end{equation}
The details of the proposed approach are provided in Algorithm \ref{alg:propCEalg}. 
It can also be noted that the complexity of the proposed estimation algorithm is $O(P(MN N_r + MN \log_2 N))$.
Under the assumption of a large-scale antenna array at the receiver, where the number of antennas $N_r$ is sufficiently high such that $N_r \gg \log_2 N$, spatial processing becomes the dominant factor.Consequently, the complexity simplifies to $O(P MN N_r)$, thus scaling linearly with the system parameters.

\begin{figure*}[t]
    \centering 
    \begin{subfigure}[t]{0.32\textwidth}
\centering
 \resizebox{0.99\columnwidth}{!}{
%
%

\definecolor{mycolor2}{rgb}{0.13, 0.70, 0.29}
\definecolor{mycolor1}{rgb}{0.58, 0.0, 0.83} 
\definecolor{mycolor3}{rgb}{0.92900,0.69400,0.12500}%
\definecolor{mycolor4}{rgb}{0.12941,0.12941,0.12941}%
\begin{tikzpicture}

\begin{axis}[%
width=3in,
height=2.2in,
at={(1.212in,0.618in)},
scale only axis,
xmin=-15,
xmax=5,
xlabel style={font=\color{mycolor4}},
xlabel={Signal-to-Noise Ratio (SNR) [dB]},
ymode=log,
ymin=2*1e-05,
ymax=1,
yminorticks=true,
ylabel style={font=\color{mycolor4}},
ylabel={BER},
axis background/.style={fill=white},
xmajorgrids,
ymajorgrids,
yminorgrids,
legend style={at={(0.01,0.01)}, anchor=south west, legend cell align=left, align=left}
]

\addplot [color=mycolor1, line width=2.0pt, mark=o, mark options={solid, mycolor1}]
  table[row sep=crcr]{%
-15	0.49791015625\\
-12.5	0.480244140625\\
-10	0.349794921875\\
-7.5	0.158115234375\\
-5	0.0730859375\\
-2.5	0.050146484375\\
0	0.0340625\\
2.5	0.027783203125\\
5	0.01845703125\\
};
\addlegendentry{$4$-QAM, baseline \cite{kanazawa2025}}

\addplot [color=mycolor2, line width=2.0pt, mark=o, mark options={solid, mycolor2}]
  table[row sep=crcr]{%
-15	0.22548828125\\
-12.5	0.142568359375\\
-10	0.073388671875\\
-7.5	0.0258203125\\
-5	0.0053515625\\
-2.5	0.000751953125\\
0	0.000146484375\\
2.5	1.953125e-05\\
5	0\\
};
\addlegendentry{$4$-QAM, proposed}

\addplot [dashed, color=mycolor1, line width=2.0pt, mark=o, mark options={solid, mycolor1}]
  table[row sep=crcr]{%
-15	0.501669921875\\
-12.5	0.47734375\\
-10	0.3875146484375\\
-7.5	0.290625\\
-5	0.245576171875\\
-2.5	0.204931640625\\
0	0.1736669921875\\
2.5	0.1519873046875\\
5	0.1391552734375\\
};
\addlegendentry{$16$-QAM, baseline \cite{kanazawa2025}}

\addplot [dashed, color=mycolor2, line width=2.0pt, mark=o, mark options={solid, mycolor2}]
  table[row sep=crcr]{%
-15	0.3366796875\\
-12.5	0.27455078125\\
-10	0.207958984375\\
-7.5	0.1477685546875\\
-5	0.0973193359375\\
-2.5	0.0540478515625\\
0	0.0295654296875\\
2.5	0.01474609375\\
5	0.0086279296875\\
};
\addlegendentry{$16$-QAM, proposed}

\end{axis}

\end{tikzpicture}%
}
 \caption{\gls{BER} vs. \gls{SNR} at fixed \gls{PDR}$=-5$ dB: the proposed scheme achieves a substantially lower BER by accurately accounting for fractional delays and Doppler shifts, while the baseline suffers significant degradation under non-integer channel parameters.} \label{fig:BERvsSNR}
    \end{subfigure}
    \hfill 
    \begin{subfigure}[t]{0.32\textwidth}
 \centering
 \resizebox{0.99\columnwidth}{!}{
%
%
\definecolor{mycolor2}{rgb}{0.13, 0.70, 0.29}
\definecolor{mycolor1}{rgb}{0.58, 0.0, 0.83}
\definecolor{mycolor3}{rgb}{0.12941,0.12941,0.12941}%
\begin{tikzpicture}

\begin{axis}[%
width=3in,
height=2.2in,
at={(1.212in,0.618in)},
scale only axis,
xmin=-20,
xmax=10,
xlabel style={font=\color{mycolor3}},
xlabel={PDR [dB]},
ymode=log,
ymin=4*1e-4,
ymax=1,
yminorticks=true,
ylabel style={font=\color{mycolor3}},
ylabel={BER},
axis background/.style={fill=white},
xmajorgrids,
ymajorgrids,
yminorgrids,
legend style={
    at={(0.99,0.2)}, 
    anchor=south east, 
    legend cell align=left, 
    align=left
}
]
\addplot [color=mycolor1, line width=2.0pt, mark=o, mark options={solid, mycolor1}]
  table[row sep=crcr]{%
-20	0.5009375\\
-17.5	0.49029296875\\
-15	0.386474609375\\
-12.5	0.26833984375\\
-10	0.1300390625\\
-7.5	0.05982421875\\
-5	0.049248046875\\
-2.5	0.038525390625\\
0	0.041796875\\
2.5	0.045595703125\\
5	0.08111328125\\
7.5	0.10439453125\\
10	0.157373046875\\
};
\addlegendentry{$4$-QAM, baseline \cite{kanazawa2025}}

\addplot [color=mycolor2, line width=2.0pt, mark=o, mark options={solid, mycolor2}]
  table[row sep=crcr]{%
-20	0.388583984375\\
-17.5	0.307373046875\\
-15	0.185078125\\
-12.5	0.042705078125\\
-10	0.0028515625\\
-7.5	0.000673828125\\
-5	0.000615234375\\
-2.5	0.000830078125\\
0	0.001494140625\\
2.5	0.0061328125\\
5	0.019345703125\\
7.5	0.04734375\\
10	0.10015625\\
};
\addlegendentry{$4$-QAM, proposed}

\addplot [dashed, color=mycolor1, line width=2.0pt, mark=o, mark options={solid, mycolor1}]
  table[row sep=crcr]{%
-20	0.5001025390625\\
-17.5	0.4940478515625\\
-15	0.4702978515625\\
-12.5	0.361826171875\\
-10	0.2962548828125\\
-7.5	0.231591796875\\
-5	0.1976953125\\
-2.5	0.179150390625\\
0	0.17052734375\\
2.5	0.1920361328125\\
5	0.1976318359375\\
7.5	0.2379052734375\\
10	0.276123046875\\
};
\addlegendentry{$16$-QAM, baseline \cite{kanazawa2025}}

\addplot [dashed, color=mycolor2, line width=2.0pt, mark=o, mark options={solid, mycolor2}]
  table[row sep=crcr]{%
-20	0.44798828125\\
-17.5	0.4136328125\\
-15	0.2444482421875\\
-12.5	0.1258251953125\\
-10	0.082255859375\\
-7.5	0.0510400390625\\
-5	0.048779296875\\
-2.5	0.0544482421875\\
0	0.0692919921875\\
2.5	0.0989892578125\\
5	0.1329443359375\\
7.5	0.17939453125\\
10	0.23017578125\\
};
\addlegendentry{$16$-QAM, proposed}

\end{axis}

\end{tikzpicture}%
}
 \caption{\gls{BER} vs.\ \gls{PDR} at fixed \gls{SNR}$=-2$ dB:
an optimal PDR balances pilot energy (which governs estimation accuracy) against data-symbol energy (which governs detection reliability); the proposed method reaches its minimum \gls{BER} at \gls{PDR}$\approx -5$ dB, compared to \gls{PDR}$\approx -2$ dB for the baseline.} \label{fig:BERvsPDR}
    \end{subfigure}
    \hfill 
    \begin{subfigure}[t]{0.32\textwidth}
        \centering
 \resizebox{0.99\columnwidth}{!}{
%
%
\definecolor{mycolor2}{rgb}{0.13, 0.70, 0.29}
\definecolor{mycolor1}{rgb}{0.58, 0.0, 0.83}
\definecolor{mycolor3}{rgb}{0.12941,0.12941,0.12941}%
\begin{tikzpicture}

\begin{axis}[%
width=3in,
height=2.2in,
at={(1.212in,0.618in)},
scale only axis,
xmin=8,
xmax=18,
xlabel style={font=\color{mycolor3}},
xlabel={PAPR [dB]},
ymode=log,
ymin=1e-4,
ymax=1,
yminorticks=true,
ylabel style={font=\color{mycolor3}},
ylabel={BER},
axis background/.style={fill=white},
xmajorgrids,
ymajorgrids,
yminorgrids,
legend style={at={(0.01,0.01)}, anchor=south west, legend cell align=left, align=left}
]

\addplot [color=mycolor1, line width=2.0pt, mark=o, mark options={solid, mycolor1}]
  table[row sep=crcr]{%
8.54413375461435 0.5009375\\
8.45431571776187 0.49029296875\\
8.60595269489337 0.386474609375\\
8.99658316688811 0.26833984375\\
9.78070969741472 0.1300390625\\
10.7027899150916 0.05982421875\\
11.9279662738512 0.049248046875\\
12.8196949776935 0.038525390625\\
13.82327931461 0.041796875\\
14.3403894061172 0.045595703125\\
14.8227406264722 0.08111328125\\
14.9974463394494 0.10439453125\\
15.0473374329197 0.157373046875\\
};
\addlegendentry{$4$-QAM, baseline \cite{kanazawa2025}}

\addplot [color=mycolor2, line width=2.0pt, mark=o, mark options={solid, mycolor2}]
  table[row sep=crcr]{%
8.45489285170661 0.388583984375\\
8.47662182784545 0.307373046875\\
9.2379683115744 0.185078125\\
10.6218370911546 0.042705078125\\
12.8882306622439 0.0028515625\\
15.0065224177719 0.000673828125\\
17.1145812411337 0.000615234375\\
18.7137137741435 0.000830078125\\
20.1990101095014 0.001494140625\\
21.2148323629007 0.0061328125\\
21.9747493538551 0.019345703125\\
22.4936310458895 0.04734375\\
22.7318927573961 0.10015625\\
};
\addlegendentry{$4$-QAM, proposed}

\addplot [dashed, color=mycolor1, line width=2.0pt, mark=o, mark options={solid, mycolor1}]
  table[row sep=crcr]{%
8.54413375461435 0.5001025390625\\
8.45431571776187 0.4940478515625\\
8.60595269489337 0.4702978515625\\
8.99658316688811 0.361826171875\\
9.78070969741472 0.2962548828125\\
10.7027899150916 0.231591796875\\
11.9279662738512 0.1976953125\\
12.8196949776935 0.179150390625\\
13.82327931461 0.17052734375\\
14.3403894061172 0.1920361328125\\
14.8227406264722 0.1976318359375\\
14.9974463394494 0.2379052734375\\
15.0473374329197 0.276123046875\\
};
\addlegendentry{$16$-QAM, baseline \cite{kanazawa2025}}

\addplot [dashed, color=mycolor2, line width=2.0pt, mark=o, mark options={solid, mycolor2}]
  table[row sep=crcr]{%
8.45489285170661 0.44798828125\\
8.47662182784545 0.4136328125\\
9.2379683115744 0.2444482421875\\
10.6218370911546 0.1258251953125\\
12.8882306622439 0.082255859375\\
15.0065224177719 0.0510400390625\\
17.1145812411337 0.048779296875\\
18.7137137741435 0.0544482421875\\
20.1990101095014 0.0692919921875\\
21.2148323629007 0.0989892578125\\
21.9747493538551 0.1329443359375\\
22.4936310458895 0.17939453125\\
22.7318927573961 0.23017578125\\
};
\addlegendentry{$16$-QAM, proposed}

\end{axis}

\end{tikzpicture}%
}
 \caption{\gls{BER} vs. \gls{PAPR} trade-off, obtained by sweeping the \gls{PDR} over $[-20,\,10]$ dB: at equal \gls{PAPR}, the proposed scheme consistently achieves lower \gls{BER}, demonstrating superior performance despite a higher \gls{PAPR}
for a fixed \gls{PDR} value.} \label{fig:PAPRvsBER}
    \end{subfigure}
    
    \caption{Simulation results for the proposed cross-pilot OTFS scheme and the
baseline superimposed-pilot method of \cite{kanazawa2025}, evaluated over a $P=4$-path vehicular channel at $f_c = 5.9$ GHz with $v_{\max}=500$ km/h, $M=64$
subcarriers, $N=16$ time slots, and $N_r=32$ receive antennas.}\label{fig:SimulationResults}
\end{figure*}

\section{Simulation Results}
To evaluate the performance of the proposed approach, simulations are conducted considering a carrier frequency $f_c = 5.9$ GHz with a subcarrier spacing $\Delta f = 30$ kHz. The \gls{OTFS} system employs $M = 64$ subcarriers and $N = 16$ blocks, while the receiver is equipped with $N_r = 32$ antennas. The wireless channel is characterized by $P = 4$ multipath components with propagation delays $\tau_p = [0, 0.9, 2.4, 3]$ $\mu$s, average path powers $\mathcal{P}_p = [0, -1, -5, -7]$ dB, and \glspl{DoA} $\theta_p = [10^\circ, 42^\circ, -25^\circ, 24^\circ]$. To simulate high-mobility conditions, a maximum speed $v_{\max} = 500$ km/h is assumed, with Doppler shifts $\nu_p = f_c \frac{v_{\max}}{c} \cos(\phi_p)$, where $\phi_p \sim \mathcal{U}[0, 2\pi]$. The proposed approach is compared against the multiple superimposed pilot scheme in \cite{kanazawa2025}; for data detection, a path-wise \gls{MF} is applied, followed by \gls{MRC}.

Fig. \ref{fig:BERvsSNR} illustrates the \gls{BER} performance as a function of the \gls{SNR}. It is observed that the baseline scheme suffers significant performance degradation due to the presence of fractional delays and Doppler shifts. Conversely, the proposed method effectively accounts for fractional channel parameters, thereby achieving a lower \gls{BER}. 
The impact of the \gls{PDR} on communication performance is analyzed in Fig. \ref{fig:BERvsPDR}. A low \gls{PDR} allocates more energy to data symbols, potentially facilitating detection, but reduces the energy available for pilots, which impairs channel estimation accuracy. In contrast, a higher \gls{PDR} enhances pilot energy and estimation precision at the expense of the data \gls{SNR}. This inherent trade-off leads to an optimal \gls{PDR} that minimizes the \gls{BER} \cite{marchese2025robust6gofdmhighmobility}. As shown in Fig. \ref{fig:BERvsPDR}, the minimum \gls{BER} is reached at a \gls{PDR} of approximately $-5$ dB and $-2$ dB for the proposed and baseline methods, respectively.
Finally, Fig. \ref{fig:PAPRvsBER} depicts the trade-off between \gls{PAPR} and \gls{BER} for both schemes across various \gls{PDR} values. Although the proposed scheme exhibits a higher \gls{PAPR} for a fixed \gls{PDR}, it provides a lower \gls{BER} when compared at a target \gls{PAPR} level, demonstrating its superior efficiency.

\section{Conclusion}
In this work, a novel \gls{OTFS} superimposed cross-pilot scheme for \gls{DoA}-aided multi-antenna receivers has been proposed. By exploiting the angular separability of multipaths, it has been demonstrated that the proposed scheme enables the integration of received delay-Doppler matrices; this reduces the effective data-to-pilot interference through averaging, thereby eliminating the need for complex iterative cancellation. Furthermore, a low-complexity disjoint estimation algorithm has been developed to accurately estimate fractional delays and Doppler shifts. Simulation results confirm that the proposed scheme achieves a significantly lower \gls{BER} compared to state-of-the-art superimposed approaches in high-mobility scenarios, while providing a manageable trade-off between \gls{PAPR} and communication performance.


\ifCLASSOPTIONcaptionsoff
  \newpage
\fi

\balance
\bibliographystyle{IEEEtran}
\bibliography{reference}

\end{document}